\newcommand{\rem}[1]{}
\newtheorem{proposition}{Proposition}
\newtheorem{theorem}{Theorem}
\newtheorem{remark}{Remark}
\newenvironment{proof}[1][Proof]{\begin{trivlist}
\item[\hskip \labelsep {\bfseries #1}]}{\end{trivlist}}
\newcommand{\qed}{\nobreak \ifvmode \relax \else
      \ifdim\lastskip<1.5em \hskip-\lastskip
      \hskip1.5em plus0em minus0.5em \fi \nobreak
      \vrule height0.75em width0.5em depth0.25em\fi}
\newcommand{\mysubeq}[2]{
\begin{subequations}\label{#1}
\begin{align}
#2
\end{align}\end{subequations}}
\begin{document}
%
% paper title
% Titles are generally capitalized except for words such as a, an, and, as,
% at, but, by, for, in, nor, of, on, or, the, to and up, which are usually
% not capitalized unless they are the first or last word of the title.
% Linebreaks \\ can be used within to get better formatting as desired.
% Do not put math or special symbols in the title.
\title{\vspace{0.25in}{Transient Safety Filter Design for Grid-Forming Inverters}}

% author names and affiliations
% use a multiple column layout for up to three different
% affiliations
\author{\IEEEauthorblockN{Soumya Kundu}
\IEEEauthorblockA{Optimization and Control Group\\
Pacific Northwest National Laboratory, USA\\
Email: soumya.kundu@pnnl.gov}
\vspace{-0.25in}
\and
\IEEEauthorblockN{Karanjit Kalsi}
\IEEEauthorblockA{Optimization and Control Group\\
Pacific Northwest National Laboratory, USA\\
Email: karanjit.kalsi@pnnl.gov}
\vspace{-0.25in}}

% conference papers do not typically use \thanks and this command
% is locked out in conference mode. If really needed, such as for
% the acknowledgment of grants, issue a \IEEEoverridecommandlockouts
% after \documentclass

% for over three affiliations, or if they all won't fit within the width
% of the page, use this alternative format:
% 
%\author{\IEEEauthorblockN{Soumya Kundu\IEEEauthorrefmark{1},
%Sijia Geng\IEEEauthorrefmark{2},
%Sai Pushpak Nandanoori\IEEEauthorrefmark{1}, 
%Ian Hiskens\IEEEauthorrefmark{2} and
%Karan Kalsi\IEEEauthorrefmark{1}}
%\IEEEauthorblockA{\IEEEauthorrefmark{1}Electrical Engineering and Computer Science\\
%University of Michigan, Ann Arbor, MI 48109 USA\\ Email:}
%\IEEEauthorblockA{\IEEEauthorrefmark{2}Optimization and Control Group\\
%Pacific Northwest National Laboratory, Richland, WA 99352 USA\\
%Email:}}

% use for special paper notices
%\IEEEspecialpapernotice{(Invited Paper)}

% make the title area
\maketitle

% As a general rule, do not put math, special symbols or citations
% in the abstract
\begin{abstract}
Unlike conventional generators, inverter-based generation do not possess any rotational inertia. While grid-forming inverters can synthesize small (virtual) inertia via advanced feedback control loops, additional control mechanisms are needed to ensure safety and security of the power grid during transients. In this paper, we propose novel real-time safety-constrained feedback controllers (``safety filters'') for droop-based (grid-forming) inverters to ensure transient security of the grid. The safety filter acts as a buffer between the network operational layer and the inverter-control layer, and only lets those dispatch control signals pass to the inverter droop-controller, which are guaranteed to not violate the safety specifications (frequency, voltage, current limits). Using a distributed barrier certificates method, we construct state-inclusive bounds on the allowable control inputs, which guarantee the satisfaction of transient safety specifications. Sum-of-square programming is used to synthesize the safety filters. Numerical simulation results are provided to illustrate the performance of the proposed filter in inverter-based microgrids.

\end{abstract}

% no keywords

% For peer review papers, you can put extra information on the cover
% page as needed:
% \ifCLASSOPTIONpeerreview
% \begin{center} \bfseries EDICS Category: 3-BBND \end{center}
% \fi
%
% For peerreview papers, this IEEEtran command inserts a page break and
% creates the second title. It will be ignored for other modes.
\IEEEpeerreviewmaketitle

\section{Introduction}
% no \IEEEPARstart

Power systems operations today are facing a paradigmatic shift as solar, wind and other inverter-based energy resources continue to grow displacing the fossil fueled generators. One one hand, the distributed nature of these inverter-based resources is driving increased deployment of standalone microgrids \cite{pogaku2007modeling}, while on the other hand reliability concerns related to reducing system inertia call for novel control strategies \cite{taylor2016power}. The emerging `grid-forming' inverter technology allows the inverters to behave as a controlled voltage source, as opposed to the conventional controlled current source (or, `grid-following' mode), enabling standalone operation of inverter-based microgrids \cite{lasseter2011smart}, and even synthesizing small (virtual) inertia. However the lower inertia issues still remain in inverter-based microgrids. In particular, resulting shrinking gap in timescales of steady-state (economic) dispatch and real-time control actions invalidates the traditional temporal decoupling between various power grid operations \cite{taylor2016power}. This fact is recognized in various stability and security-constrained optimization formulations for microgrids, as exemplified in \cite{barklund2008energy,Xu:2018,maulik2019stability} and related works. On the other hand, this has also spurred the interest in deriving local and distributed stability conditions for the inverters, specified with respect to its control parameters \cite{simpson2013synchronization,Schiffer:2014,Vorobev:2018,kundu2019identifying,vorobev2019decentralized,nandanoori2020distributed}, to facilitate the emergence of plug-and-play operations in microgrids and distribution networks \cite{farhangi2010path,huang2011future}. Unlike stability and stabilization, the concept of safety violations of node voltages, line currents and frequencies, especially during transients, leading to severe power quality issues and possible damages to the electrical equipment \cite{kusko2007power,Xu:2018} have been largely ignored in the literature. While stabilization refers to steering the power system to (or, close to) its operating point after a disturbance, the concept of safety control implies keeping this system away from poor (and damaging) operating conditions, e.g. avoiding violations of voltage and frequency limits.

The concept of safety control falls under the broader category of constrained control methods, such as the model predictive control \cite{Mayne:2000}, reference governors \cite{garone2017reference,nicotra2018explicit} and barrier functions based methods \cite{Prajna:2007,Wieland:2007}. As opposed to the traditional optimization-based model predictive control and reference governor methods, the work presented in this paper is more aligned with approaches based on the (control) barrier functions (as in \cite{Wieland:2007,Ames:2017,Wang:2018}) and the explicit reference governors \cite{nicotra2018explicit} which guarantee safety via forward invariance of some (safe) set. In particular, such methods allow synthesis of safe feedback control policies by construction of appropriate (control) barrier functions which satisfy certain algebraic conditions guaranteeing set invariance of safety specifications. Recent works have started exploring the applications of barrier functions based methods for transient safety analysis of power systems. In \cite{kundu2019distributed}, the authors proposed a computational algorithm for designing safety certificates for an inverter-based microgrid using sum-of-squares (SOS) algorithms. In \cite{chen2019compositional}, the authors developed an assume-guarantee type contracts for safety of bulk power systems network. Authors in \cite{zhang2019distributed} applied barrier certificates based analysis for the problem of transient safety in terms of hierarchical frequency control in bulk power systems networks.

The main contribution of this paper is in applying barrier functions based method to identify state-inclusive bounds on the control inputs, that can ascertain safety of the network in a distributed manner under uncertainties. In particular, we envision a hierarchical plug-and-play framework of operation for the microgrids (similar to \cite{guerrero2010hierarchical}) where a microgrid coordinator dispatches active and reactive power control set-points to the droop-controlled inverters. The droop-controlled inverters in turn communicate to the microgrid coordinator, a set of bounds on the control inputs, which guarantee safety under fluctuations in the network conditions. Closed-form expressions are provided to synthesize these safety bounds, which act like a ``safety filter'' that allows only the dispatched set-points that lie within those bounds. Numerical results are provided to illustrate the use of the safety filters. The rest of the article is organized as follows: Section\,\ref{S:problem} explains the microgrid model and the describes the problem; Section\,\ref{S:back} presents the necessary background on barrier functions and safety certificates. The main computational and algorithmic developments are described in Section\,\ref{S:algo}, with numerical results presented in Section\,\ref{S:resul}. We conclude the article in Section VI. Throughout the text, $\left|\,\cdot\,\right|$ is used to denote both the Euclidean norm of a vector and the absolute value of a scalar; and $\mathbb{R}\left[x\right]$ to denote the ring of all polynomials in $x\in\mathbb{R}^n$.

%===========================================
\section{Problem Description}\label{S:problem}
%===========================================

\subsection{Microgrid Model}\label{S:model}
We consider the following model of droop-controlled grid-forming inverter dynamics \cite{Coelho:2002,Schiffer:2014,kundu2019distributed}:
\mysubeq{E:droop}{
\dot{\theta}_i & = \omega_i\,,\\
\tau_i\dot{\omega}_i & = -\omega_i + \lambda_i^p \left(P_i^{\text{set}}-P_i\right)\\
\tau_i\dot{v}_i & = v_i^0-v_i + \lambda_i^q \left(Q_i^{\text{set}}-Q_i\right)
}
where $\lambda_i^p>0$ and $\lambda_i^q>0$ are the droop-coefficients associated with the active power vs. frequency and the reactive power vs. voltage droop curves, respectively; $\tau_i$ is the time-constant of a low-pass filter used for the active and reactive power measurements; $\theta_i\,,\,\omega_i$ and $v_i$ are, respectively, the phase angle, speed and voltage magnitude; $v^0_i$ is the desired (nomial) voltage magnitude; $P_i^{\text{set}}$ and $Q^{\text{set}}_i$ are the active power and reactive power set-points, respectively. Finally, $P_i$ and $Q_i$ are, respectively, the active and reactive power injected into the network which relate to the neighboring bus voltage phase angle and magnitudes as:
\mysubeq{E:PQ}{
P_i &= v_i{\sum}_{k\in\mathcal{N}_i} v_k\left(G_{i,k}\cos\theta_{i,k} + B_{i,k}\sin\theta_{i,k}\right)\\
Q_i &= v_i{\sum}_{k\in\mathcal{N}_i} v_k\left(G_{i,k}\sin\theta_{i,k} - B_{i,k}\cos\theta_{i,k}\right)
}
where $\theta_{i,k}=\theta_i-\theta_k$\,, and $\mathcal{N}_i$ is the set of neighbor nodes. $G_{i,k}$ and $B_{i,k}$ are respectively the transfer conductance and susceptance values of the line connecting the nodes $i$ and $k$\,. At the equilibrium (steady-state) operation:
\begin{align*}
\forall i:\quad P_i=P_i^{\text{set}},\,Q_i=Q_i^{\text{set}},\,\omega_i=0,\,v_i=v_i^0\,.
\end{align*} 
The dynamics of the inverters \eqref{E:droop} bear some similarities with the classical swing equation-based models of synchronous generators, where the time-constant $\tau$ resembles the rotational inertial mass of the conventional generators. The values of the time-constants are typically orders of magnitude smaller than the rotational inertial mass. As such, these low-inertial inverter-based systems are more prone to transient fluctuations than bulk power grid  \cite{Xu:2018}, requiring faster time-scale control of the microgrid to ensure operational reliability.

In a hierarchical microgrid control architecture (illustrated in Fig.\,\ref{F:safety_multi}), a microgrid operator (or, coordinator) would be dispatching the active and reactive power setpoints to the individual inverters, to achieve certain operational goals (e.g. loss minimization). The dispatched active and reactive power set-points can be modeled as:
\begin{align}
P_i^{\text{set}} = P^0_i+u^p_i\,,~Q_i^{\text{set}} = Q^0_i+u^q_i\,,
\end{align}
where $P^0_i$ and $Q^0_i$ are the set-points for the unperturbed (or nominal) operating condition; and $u^p_i$ and $u^q_i$ are any deviations from the nominal. Changes in the dispatched control set-points, in addition to uncertainties and variability in the network, drive transient fluctuations in the terminal voltage and frequency which may violate the `safety' limits determined via engineering design, such as:
\begin{align*}
\underline{v_i}\leq v_i(t)\leq \overline{v_i}\,,\quad \underline{\omega_i}\leq \omega_i(t)\leq \overline{\omega_i}\,.
\end{align*}

In this paper, we are interested in synthesizing safe set of possible dispatched set-points of the form:
\begin{align*}
u_i^p\in\mathcal{U}_i^p\,,\quad u_i^q\in\mathcal{U}_i^q
\end{align*}
such that safety of the voltage and frequency at the terminal of each inverter is guaranteed robustly under a set of (uncertain) operating conditions. 
%Note that this is aligned with the plug-and-play operation principle envisioned for microgrids \cite{farhangi2010path,huang2011future,kundu2019identifying}.

\subsection{Safety Filter: Problem Formulation}
\begin{figure}[thpb]
\centering
\includegraphics[scale=0.154]{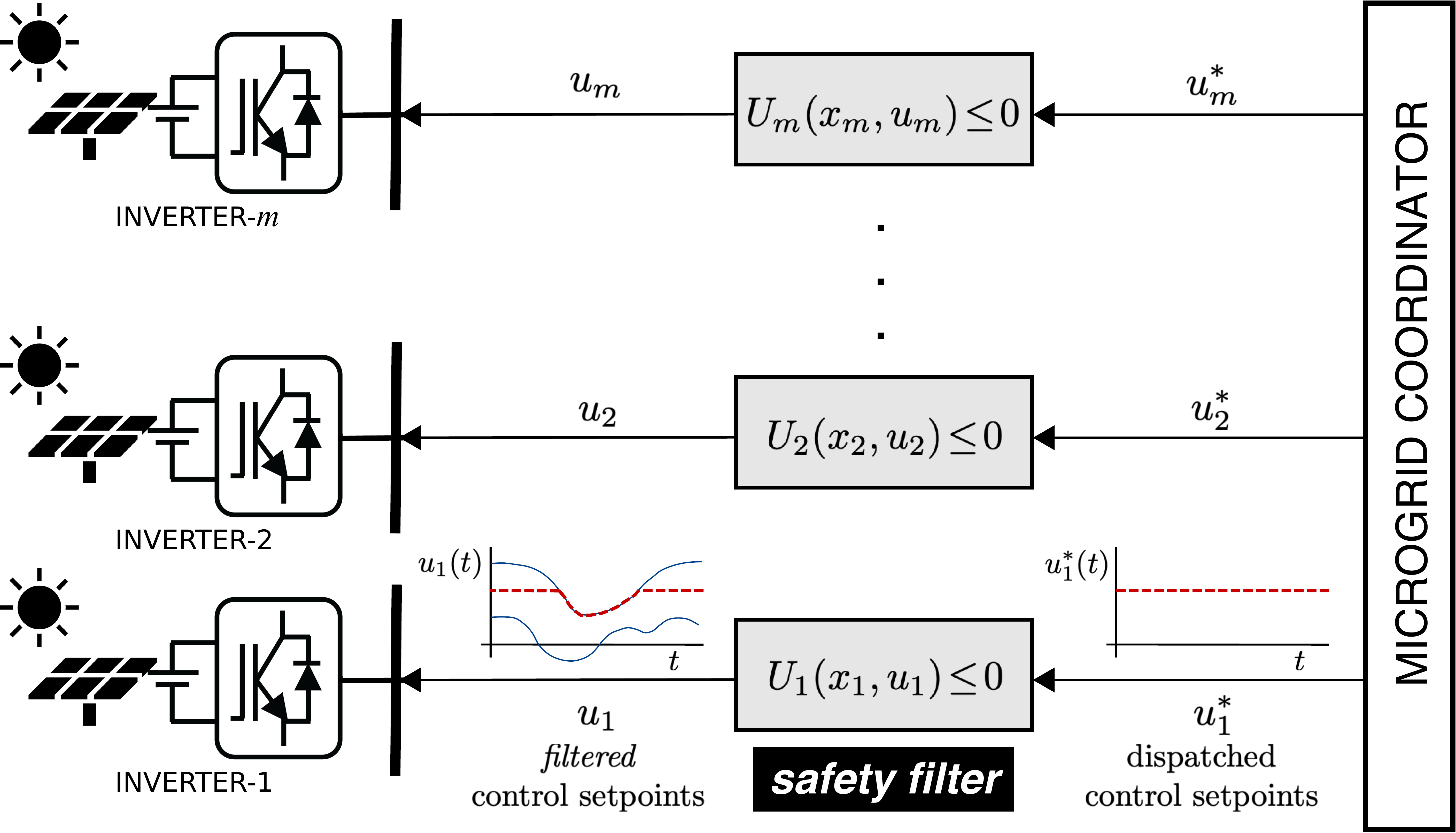}
\caption{Illustration of the safety filters acting at the individual inverter terminals blocking out (and modifying) any coordinator dispatched set-points that are (potentially) unsafe.}
\label{F:safety_multi}
\end{figure}
In this paper, we are not interested in computing any particular safety control policies. Instead, we are interested in identifying a set of allowable control inputs which will ensure safety of the sub-system under bounded disturbances from the neighbors. In particular, as illustrated in Fig.\,\ref{F:safety_multi}, we would like to identify state-inclusive semi-algebraic set of the form:
\begin{align}\label{E:sector_idea}
\mathcal{U}_i(x_i):=\left\lbrace u\,\left| \, U_i(x_i,u)\leq 0\right.\right\rbrace
\end{align}
where $U_i(\cdot)$ is a set of polynomials in $x_i$ and $u$\,, such that the safety of the sub-system is guaranteed for every control input $u_i\in\mathcal{U}_i(x_i)$ whenever the disturbances from the neighbor are norm-bounded. In this particular case, we will use the barrier level-sets ($B_j(x_j)\geq c_j$), introduced in Sec.\,\ref{S:back}, as a proxy for the norm-bounds on the neighbor states ($x_j$), but the results hold for any other types of norm-bounds.

%===========================================
\section{Background}\label{S:back}
%===========================================

%-------------------------------------------------
\subsection{Safety Certificates: Barrier Functions}
%-------------------------------------------------

Consider a nonlinear dynamical system of the form 
\begin{align}\label{E:f}
\dot{x}(t) &= f(x(t))~ \forall t \geq 0\,,~x\in\mathbb{R}^n\,,
\end{align}
with an equilibrium at the origin $(f(0)= 0)$, where $f:\mathbb{R}^n \rightarrow \mathbb{R}^n$ is locally Lipschitz. For brevity, we would drop the argument $t$ from the state variables, whenever obvious. 
In contrast to asymptotic stability which concerns with the convergence of the state variables to the stable equilibrium, the notion of `safety' comes from engineering design specifications. From the design perspective, the system trajectories are not supposed to cross into the certain regions in the state-space marked as `unsafe'. Let us assume that the `unsafe' region of operation for the system \eqref{E:f} is given by the domain 
\begin{align}
\mathcal{X}_u:=\lbrace x\in\mathbb{R}^n\left\vert \,w_i(x)> 0\,,~i=1,2,\dots,l\right.\rbrace
\end{align}
where $w_i:\mathbb{R}^n\mapsto\mathbb{R}$ are a set of $l$ ($\geq 1$) polynomials. Safety of such systems can be verified through the existence (or, construction) of continuously differentiable barrier functions $B:\mathbb{R}^n\mapsto\mathbb{R}$ of the form \cite{Prajna:2007,Wieland:2007,Ames:2017,Wang:2018}:
\begin{subequations}\label{E:B}
\begin{align}
B(x)&\geq 0\quad \forall x\in\mathbb{R}^n\backslash\mathcal{X}_u\\
B(x)&<0\quad \forall x\in\mathcal{X}_u\\
 (\nabla_x B)^T\!f(x)+\alpha\left(B(x)\right) &\geq 0\quad \forall x\in\mathbb{R}^n
\end{align}
\end{subequations}
where $\alpha(\cdot)$ is an extended class-$\mathcal{K}$ function\footnote{A continuous function $\alpha:(-a,b)\mapsto(-\infty,\infty)$\,, for some $a,b>0\,,$ is \textit{extendend class-$\mathcal{K}$} if it is strictly increasing and $\alpha(0)=0$ \cite{Khalil:1996}.}. The third condition ensures that at the level-set $B=0$ the value of the barrier function is increasing along the system trajectories. Safety is guaranteed for all trajectories starting inside the domain $\lbrace x\left|\,B(x)\geq 0\right.\rbrace$ which is \textit{invariant} under the dynamics \eqref{E:f}.

%-------------------------------------------------
\subsection{Sum-of-Squares (SOS) Optimization}
%-------------------------------------------------
%Relatively recent studies have explored how SOS-based methods can be utilized to find Lyapunov functions by restricting the search space to SOS polynomials \cite{Wloszek:2003,Parrilo:2000,Tan:2006,Anghel:2013}. 
%Let us denote by $\mathbb{R}\left[x\right]$ the ring of all polynomials in $x\in\mathbb{R}^n$. 
%\begin{definition}
A multivariate polynomial $p\!\in\!\mathbb{R}\left[x\right],~x\!\in\!\mathbb{R}^n$, is a \textit{sum-of-squares} (SOS) polynomial if there exist some polynomial functions $h_i(x), i = 1\ldots s$ such that 
$p(x) = \sum_{i=1}^s h_i^2(x)$.
We denote the ring of all SOS polynomials in $x$ by $\Sigma[x]$.
%\end{definition} 
The problem of determining whether or not a given polynomial is an SOS can be cast into an equivalent semi-definite problem \cite{sostools13}. 
%which can be solved with SOSTOOLS, a MATLAB$^\text{\textregistered}$ toolbox \cite{sostools13}, along with a semi-definite programming solver such as SeDuMi \cite{Sturm:1999}. 
%The SOS technique can be used to search for polynomial LFs by translating the conditions in Theorem\,\ref{T:Lyap} to equivalent SOS conditions \cite{sostools13,Wloszek:2003,Wloszek:2005,Antonis:2005,Antonis:2005a, Chesi:2010a }. 
An important result from algebraic geometry, called Putinar's Positivstellensatz theorem \cite{Putinar:1993,Lasserre:2009}, helps in translating conditions such as in \eqref{E:B} into SOS feasibility problems. 
\begin{theorem}\label{T:Putinar}
Let $\mathcal{K}\!\!=\! \left\lbrace x\in\mathbb{R}^n\left\vert\, k_1(x) \geq 0\,, \dots , k_m(x)\geq 0\!\right.\right\rbrace$ be a compact set, where $k_j$ are polynomials. Define $k_0=1\,.$ Suppose there exists a $\mu\!\in\! \left\lbrace {\sum}_{j=0}^m\sigma_jk_j \left\vert\, \sigma_j \!\in\! \Sigma[x]\,\forall j \right. \right\rbrace$ such that $\left\lbrace \left. x\in\mathbb{R}^n \right\vert\, \mu(x)\geq 0 \right\rbrace$ is compact. Then,  
\begin{align*}
p(x)\!>\!0~\forall x\!\in\!\mathcal{K}
\!\implies\! p \!\in\! \left\lbrace {\sum}_{j=0}^m\sigma_jk_j \left\vert\, \sigma_j \!\in\! \Sigma[x]\,\forall j \right. \right\rbrace\!.
\end{align*}
\end{theorem}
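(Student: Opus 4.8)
The plan is to argue by contraposition via a functional-analytic (GNS-type) construction, following the line of Putinar's original argument. Write $M := \{\sum_{j=0}^m \sigma_j k_j \mid \sigma_j \in \Sigma[x]\}$ for the quadratic module in question, and assume for contradiction that $p > 0$ on $\mathcal{K}$ yet $p \notin M$. A preliminary reduction handles the bookkeeping: the stated hypothesis --- existence of $\mu \in M$ whose superlevel set $\{x \mid \mu(x) \ge 0\}$ is compact --- is equivalent to $M$ being \emph{Archimedean}, meaning that for every $q \in \mathbb{R}[x]$ there is $N \in \mathbb{N}$ with $N \pm q \in M$; one implication is immediate and the converse is standard, and it is the Archimedean form I will use.

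First I would construct a separating functional. The set $M$ is a convex cone in the real vector space $\mathbb{R}[x]$ and contains $1 = k_0$, and the Archimedean property makes $1$ an algebraic interior (core) point of $M$: from $N \mp q \in M$ one gets $1 \mp q/N \in M$ by multiplying with the square $1/N$, hence $1 + t q \in M$ for all sufficiently small $|t|$. Since $p \notin M$, a Hahn--Banach separation argument then produces a nonzero linear functional $L : \mathbb{R}[x] \to \mathbb{R}$ with $L \ge 0$ on $M$ and $L(p) \le 0$; because $M$ is a cone and $1$ is a core point, $L(1) > 0$, so I may normalize $L(1) = 1$. Note $L(q\,h^2) \ge 0$ for all $q \in M$ and $h \in \mathbb{R}[x]$, since $q h^2 \in M$.

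The technical heart is a spectral realization of $L$. The bilinear form $\langle f, g\rangle := L(fg)$ is positive semidefinite on $\mathbb{R}[x]$; passing to the quotient by its kernel and completing yields a Hilbert space $\mathcal{H}$ in which the classes $\bar f$ of polynomials are dense, with unit vector $\bar 1$. Multiplication by $x_i$ descends to an operator $T_i$ on this dense subspace; the $T_i$ are symmetric and mutually commuting, and the estimate $\langle T_i \bar h, T_i \bar h\rangle = L(x_i^2 h^2) \le N\, L(h^2) = N \|\bar h\|^2$ (from $(N - x_i^2) h^2 \in M$) shows each $T_i$ is bounded, hence extends to a bounded self-adjoint operator on $\mathcal{H}$. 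The spectral theorem for a finite commuting family of bounded self-adjoint operators gives a projection-valued measure $E$ on $\mathbb{R}^n$ with $T_i = \int \xi_i\, dE(\xi)$. For each $j$ and polynomial $h$ one has $\langle k_j(T_1,\dots,T_n)\,\bar h, \bar h\rangle = L(k_j h^2) \ge 0$, so by density $k_j(T) \succeq 0$, which forces the support of $E$ into $\{k_j \ge 0\}$; hence $\operatorname{supp} E \subseteq \mathcal{K}$. Putting $\nu(\cdot) := \langle E(\cdot)\,\bar 1, \bar 1\rangle$ gives a positive measure with $\nu(\mathbb{R}^n) = \|\bar 1\|^2 = L(1) = 1$ and support in $\mathcal{K}$, and since $p(T)\bar 1 = \bar p$ we get $L(p) = \langle \bar p, \bar 1\rangle = \langle p(T)\bar 1, \bar 1\rangle = \int_{\mathcal{K}} p\, d\nu \ge \min_{\mathcal{K}} p > 0$, using continuity and strict positivity of $p$ on the compact set $\mathcal{K}$. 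This contradicts $L(p) \le 0$, so $p \in M$.

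I expect two points to be the real obstacles. The first is making the separation step rigorous: proving that $1$ is a core point of $M$ and then invoking Hahn--Banach in the purely algebraic, non-topological setting of $\mathbb{R}[x]$ --- this is exactly where the Archimedean hypothesis is indispensable. The second is the equivalence between the ``compact superlevel set'' hypothesis as stated and Archimedean-ness of $M$, which requires its own (standard but slightly delicate) argument. By contrast, once boundedness of the $T_i$ is in hand, the spectral-theorem and measure-theoretic steps are routine. An alternative, less self-contained route would deduce the result from Schmüdgen's Positivstellensatz (itself resting on the classical Krivine--Stengle Positivstellensatz) plus a bounding trick, but the functional-analytic construction above is the cleaner path.
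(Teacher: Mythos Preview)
The paper does not prove Theorem~\ref{T:Putinar}; it is quoted from \cite{Putinar:1993,Lasserre:2009} as background, so there is no paper-side argument to compare against. Your sketch is a faithful and essentially correct outline of Putinar's original functional-analytic proof: separate $p$ from the quadratic module $M$ by an algebraic Hahn--Banach argument (using that the Archimedean property makes $1$ a core point of the cone $M$), run the GNS construction on the resulting positive functional, observe that the Archimedean bound $N - x_i^2 \in M$ forces the multiplication operators $T_i$ to be bounded self-adjoint, and invoke the joint spectral theorem for commuting bounded self-adjoint operators to obtain a representing probability measure supported in $\mathcal{K}$, contradicting $L(p)\le 0$. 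The two places you flag as delicate --- the purely algebraic separation step and the equivalence between the stated ``compact superlevel set of some $\mu\in M$'' hypothesis and Archimedean-ness of $M$ --- are indeed where the real work lies, and both are treated in the references the paper cites. One cosmetic remark: your phrase ``multiplying with the square $1/N$'' is fine (since $1/N=(1/\sqrt{N})^2$), but it is simpler just to note that $M$ is a convex cone and hence closed under positive scalar multiplication.
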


\begin{remark}\label{R:Putinar}
Using Theorem\,\ref{T:Putinar}, one can translate the problem of checking that $p\!>\!0$ on $\mathcal{K}$ into an SOS feasibility problem where we seek the SOS polynomials $\sigma_0\,,\,\sigma_j\,\forall j$ such that $p\!-\!\sum_j\sigma_j k_j$ is {SOS.
%, which can be efficiently solved using semi-definite programming. Note that Theorem\,\ref{T:Putinar} can easily handle equality constraints, since each 
Note that any} equality constraint $k_i(x)\!=\!0$ can be expressed as two inequalities $k_i(x)\!\geq 0$ and $k_i(x)\!\leq\! 0$. In many cases, especially for the $k_i\,\forall i$ used throughout this work, a $\mu$ satisfying the conditions in Theorem\,\ref{T:Putinar} is guaranteed to exist (see \cite{Lasserre:2009}), and need not be searched for.
\end{remark}

\subsection{Distributed Barrier Certificates}
An interconnected network model of the microgrid with $m$ droop-controlled inverters can be compactly expressed as:
\mysubeq{E:network}{
\dot{x}_i&=f_i(x_i)+g_i(x_i)u_i + {\sum}_{j\in\mathcal{N}_i}h_{ij}(x_i,x_j)\,,\\
\mathcal{X}_{u,i}&:=\lbrace x_i\,|\,w_j(x_i)\geq 0\,,~j=1,2,\dots,l_i\rbrace
}
where each $i\in\lbrace 1,2,\dots,m\rbrace$ identifies an inverter. $x_i\in\mathbb{R}^{n_i}$ is the $n_i$-dimensional state vector associated with the $i$-th inverter, while $u_i$ is an $r_i$-dimensional control input vector. We assume that the origin is an operating point of interest of the networked system. We assume that $h_{ij}(x_i,0)=0$ for all $x_i$\,. Moreover $f_i,\,g_i$ and $h_{ij}$ are locally Lipschitz functions. 

In a recent work \cite{kundu2019distributed}, a design procedure was presented to compute state-feedback control policies that guarantee safety of the inverter-based microgrids via distributed barrier certificates as summarized in the following result:

\begin{theorem}
\cite{kundu2019distributed} If there exist continuous functions $B_i(x_i)$\,, control policies $u_i$ and non-negative scalars $c_i$ satisfying 
\mysubeq{E:conditions}{
\forall i:~&B_i(0)>c_i\label{E:inclusion}\\
&B_i(x_i)<0\quad \forall x_i\in\mathcal{X}_{u,i}\label{E:safety}\\
 &\dot{B}_i \geq 0\quad \forall x_i\in\partial\mathcal{D}_i[c_i],\,\forall x_j\in\mathcal{D}_j[c_j]~\forall j\in\mathcal{N}_i\label{E:derivative}\\
 &\dot{B}_i=\nabla_{x_i}B_i^T(f_i(x_i)+g_i(x_i)u_i + {\sum}_{j\in\mathcal{N}_i}h_{ij}(x_i,x_j))\,.\notag
}
where $\mathcal{D}_i[c_i]:=\lbrace x_i\,|\,B_i(x_i)\geq c_i\rbrace~\forall i$ and $\partial\mathcal{D}_i[c_i]:=\lbrace x_i\,|\,B_i(x_i)=c_i\rbrace$ is the boundary set of the domain $\mathcal{D}_i[c_i]$\,. then the safety of the interconnected system \eqref{E:network} is guaranteed for all $t\geq 0$ whenever $B_i(x_i(0))\geq c_i\,\forall i$\,, i.e.
\begin{align*}
x_i(0)\in\mathcal{D}_i[c_i]~\forall i\implies x_i(t)\in\mathbb{R}^{n_i}\backslash\mathcal{X}_{u,i}~\forall i\,\forall t\geq 0\,.
\end{align*}
Moreover there is a neighborhood $\mathcal{X}_i$ around origin (i.e. $0\in\mathcal{X}_i\,\forall i$) such that $\mathcal{X}_i\subseteq\mathcal{D}_i[c_i]$\,.\hfill\hfill\qed
\end{theorem}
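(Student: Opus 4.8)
The plan is to prove the statement in three steps: establish the local containment $\mathcal{X}_i\subseteq\mathcal{D}_i[c_i]$; prove forward invariance of the product set $\mathcal{D}:=\mathcal{D}_1[c_1]\times\cdots\times\mathcal{D}_m[c_m]$ under the closed loop; and then read off safety from invariance together with \eqref{E:safety}. The containment is immediate: each $B_i$ is continuous and $B_i(0)>c_i$ by \eqref{E:inclusion}, so there is an open ball $\mathcal{X}_i\ni 0$ on which $B_i>c_i$, whence $\mathcal{X}_i\subseteq\{x_i\,|\,B_i(x_i)\geq c_i\}=\mathcal{D}_i[c_i]$; in particular each $\mathcal{D}_i[c_i]$, and hence $\mathcal{D}$, has nonempty interior.

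For invariance I would substitute the feedback laws and write the closed loop as $\dot{x}=F(x)$ with $i$-th block $F_i(x)=f_i(x_i)+g_i(x_i)u_i+\sum_{j\in\mathcal{N}_i}h_{ij}(x_i,x_j)$; since $f_i,g_i,h_{ij}$ are locally Lipschitz and (as a standing assumption) so are the $u_i$, the field $F$ is locally Lipschitz, so forward solutions are unique and Nagumo's invariance theorem applies: the closed set $\mathcal{D}$ is forward invariant iff $F(x)$ lies in the Bouligand tangent cone $T_{\mathcal{D}}(x)$ for every $x\in\partial\mathcal{D}$. Since $\mathcal{D}$ is a product, $T_{\mathcal{D}}(x)=T_{\mathcal{D}_1[c_1]}(x_1)\times\cdots\times T_{\mathcal{D}_m[c_m]}(x_m)$, so it suffices to check $F_i(x)\in T_{\mathcal{D}_i[c_i]}(x_i)$ for each $i$. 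This is trivial when $B_i(x_i)>c_i$ (the cone is $\mathbb{R}^{n_i}$), and in the boundary case $B_i(x_i)=c_i$ with $\nabla_{x_i}B_i(x_i)\neq 0$ the cone is $\{v\,|\,\nabla_{x_i}B_i(x_i)^Tv\geq 0\}$, so one needs $\nabla_{x_i}B_i(x_i)^TF_i(x)=\dot{B}_i\geq 0$. But $x\in\partial\mathcal{D}$ with $B_i(x_i)=c_i$ forces $x_i\in\partial\mathcal{D}_i[c_i]$ and $x_j\in\mathcal{D}_j[c_j]$ for all $j$, in particular for all $j\in\mathcal{N}_i$, which is exactly the hypothesis of \eqref{E:derivative}; hence $\dot{B}_i\geq 0$ and $F(x)\in T_{\mathcal{D}}(x)$, so $\mathcal{D}$ is forward invariant. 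Safety then follows at once: if $x_i(0)\in\mathcal{D}_i[c_i]$ for all $i$, then $x(0)\in\mathcal{D}$, hence $x(t)\in\mathcal{D}$ and $B_i(x_i(t))\geq c_i\geq 0$ for all $i$ and $t\geq 0$; since \eqref{E:safety} gives $\mathcal{X}_{u,i}\subseteq\{x_i\,|\,B_i(x_i)<0\}$, we conclude $x_i(t)\notin\mathcal{X}_{u,i}$ for all $i$ and $t\geq 0$.

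The crux is the invariance step, specifically making the non-strict inequality $\dot{B}_i\geq 0$ at the boundary do its job. The coupled, conditional form of \eqref{E:derivative} is what makes the product structure essential: on $\mathcal{D}$ every neighbor automatically lies in its own sublevel set, so the conditional derivative bound becomes unconditional precisely where Nagumo needs it. The non-strictness is why I would route the argument through Nagumo's (Bony--Brezis) subtangentiality theorem rather than a naive ``first exit time'' contradiction, which only yields $\dot{B}_i=0$ at the candidate exit instant --- the barrier may graze the boundary --- and would need either a strict inequality or a comparison-lemma refinement such as the extended class-$\mathcal{K}$ form $\dot{B}_i+\alpha(B_i)\geq 0$. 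Two mild regularity items are needed for a fully rigorous argument: enough smoothness of the $u_i$ for forward uniqueness, and that $c_i$ be a regular value of $B_i$ (so that $\nabla_{x_i}B_i\neq 0$ on $\partial\mathcal{D}_i[c_i]$ and the boundary tangent cone has the stated form); both are standard in this setting.
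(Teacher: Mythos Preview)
The paper does not actually prove this theorem: it is quoted from \cite{kundu2019distributed} as background and closed with a \qed\ symbol, with no argument given. So there is no ``paper's own proof'' to compare against here.

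Your argument is the standard one and is correct. The three-step decomposition (continuity gives a neighborhood of the origin inside each $\mathcal{D}_i[c_i]$; Nagumo yields forward invariance of the product $\mathcal{D}$; invariance plus $c_i\geq 0$ and \eqref{E:safety} give safety) is exactly how this type of distributed barrier result is established. The key observation --- that on $\partial\mathcal{D}$ the coupled, conditional hypothesis \eqref{E:derivative} becomes unconditional because every neighbor automatically satisfies $x_j\in\mathcal{D}_j[c_j]$ --- is precisely the point of the result. One small remark: rather than invoking the product formula $T_{\mathcal{D}}(x)=\prod_i T_{\mathcal{D}_i[c_i]}(x_i)$ (which, as you note, needs the regular-value assumption to hold cleanly), you can sidestep the issue by writing $\mathcal{D}=\bigcap_i\{x\,|\,B_i(x_i)\geq c_i\}$, applying Nagumo to each superlevel set in the full state space separately, and then using that the intersection of forward-invariant sets is forward invariant under uniqueness of solutions. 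This yields the same conclusion with slightly less geometric bookkeeping. The two regularity caveats you flag (Lipschitz feedback for uniqueness, $c_i$ a regular value of $B_i$) are indeed the only technical gaps between the theorem as stated and a fully rigorous proof.
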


SOS based techniques were used in \cite{kundu2019distributed,Wang:2018} to compute the barrier functions for each isolated and autonomous sub-system of the form: $ \dot{x}_i=f_i(x_i)$\,, satisfying the conditions in \eqref{E:B}. The set $\mathcal{D}_i[0]=\lbrace x\,\left|\,B_i(x)\geq 0\right.\rbrace$ is an invariant subset of the safety region of the isolated subsystem $i$\,. Moreover, a feedback control policy that satisfied the safety conditions can be computed by solving an optimization problem similar to:
\mysubeq{E:control_sos}{
\forall i:&\qquad\min_{u_i(x_i)}\quad \bar{u}_i\\
&\text{s.t.}\quad \|u_i(x_i)\|_\infty\leq \bar{u}_i\,~\forall x_i\in\mathcal{D}_i[c]\,,\\
&\text{and }\left\lbrace\begin{array}{l}\nabla_{x_i}B_i^T(f_i+g_iu_i+{\sum}_{j\in\mathcal{N}_i}h_{ij})\geq 0\\
\forall x_i\in\partial\mathcal{D}_i[c_i]\,,\,\forall x_j\in\mathcal{D}_j[c_j],\,j\in\mathcal{N}_i\end{array}\right.\label{E:control}
}

%===========================================
\section{Safety Filter Design: Main Results}\label{S:algo}
%===========================================

Consider the interconnected network described by \eqref{E:network}, where each isolated and autonomous subsystem-$i$ of the form $\dot{x}_i \!=\! f_i(x_i)$ admits a barrier function $B_i(x_i)$, satisfying the conditions in \eqref{E:B}, such that the domain $\mathcal{D}_i[0]$ is an invariant subset of the safety region specified for the isolated subsystem $i$. Moreover, for convenience, each of the subsystem barrier functions are scaled so that $B_i(0)=1\,\forall i$\,. As such, in this section, we will concern ourselves with the bounded region of the interconnected network of the form: 
\begin{align*}
\mathcal{D}_1[c_1]\times\mathcal{D}_2[c_2]\times\dots\times\mathcal{D}_m[c_m]
\end{align*}
for some $c_i\in[0,1)$\,. We can state the following results:
\begin{proposition}\label{P:existence}
Consider subsystem-$i$ of the interconnected system \eqref{E:network}. If $\left|\nabla B_i^Tg_i\right|\!>\!0$ on the boundary set $\partial\mathcal{D}_i[c_i]$\,, then, under bounded disturbances from the neighbors $j\in\mathcal{N}_i$ of the form $x_j\in\mathcal{D}_j[c_j]$, there exists a state-feedback control policy for subsystem-$i$ such that $\mathcal{D}_i[c_i]$ is an invariant domain.
\end{proposition}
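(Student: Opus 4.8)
The plan is to reduce the claim to exhibiting, for subsystem~$i$, a state feedback $u_i(x_i)$ for which $\dot B_i\ge 0$ on the boundary set $\partial\mathcal D_i[c_i]$ uniformly over all neighbor states $x_j\in\mathcal D_j[c_j]$, $j\in\mathcal N_i$. Such a $u_i$ makes condition~\eqref{E:derivative} hold for index~$i$, and the forward invariance of $\mathcal D_i[c_i]=\{x_i\,|\,B_i(x_i)\ge c_i\}$ under every admissible (time-varying) neighbor signal confined to the $\mathcal D_j[c_j]$ then follows by the flow-invariance (Nagumo-type) argument underlying Theorem~2: at a would-be first exit time one has $B_i=c_i$, hence $\dot B_i\le 0$ by the choice of that time, yet $\dot B_i\ge 0$ by the boundary inequality, so $\dot B_i=0$ there, and the borderline case is closed off by the standard refinement of Nagumo's theorem for closed sets, using that $\nabla_{x_i}B_i\neq 0$ on $\partial\mathcal D_i[c_i]$ (implied by the hypothesis $|\nabla_{x_i}B_i^Tg_i|>0$). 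Hence the substance of the proof is that the control term can dominate the rest of $\dot B_i$ uniformly on the boundary.

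To that end I would first record that $\mathcal S_i:=\partial\mathcal D_i[c_i]\times\prod_{j\in\mathcal N_i}\mathcal D_j[c_j]$ is compact: it is closed because the $B_j$ are continuous, and bounded because each $\mathcal D_j[c_j]$ with $c_j\in[0,1)$ is bounded, as posited at the start of this section (where the $B_j$ are normalized to $B_j(0)=1$). Since $f_i$, the $h_{ij}$ and $\nabla_{x_i}B_i$ are continuous, the ``uncontrolled'' part $\psi_i(x_i,\{x_j\}):=\nabla_{x_i}B_i^T\big(f_i(x_i)+\sum_{j\in\mathcal N_i}h_{ij}(x_i,x_j)\big)$ attains a finite minimum $-\beta_i$ on $\mathcal S_i$; in fact, using the barrier inequality $(\nabla_xB)^Tf+\alpha(B)\ge 0$ from~\eqref{E:B}, one may take $\beta_i=\alpha(c_i)+M_i$ with $M_i$ any bound on $|\nabla_{x_i}B_i^T\sum_{j}h_{ij}|$ over $\mathcal S_i$. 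Writing $a_i(x_i):=g_i(x_i)^T\nabla_{x_i}B_i(x_i)$, so that $\nabla_{x_i}B_i^Tg_iu_i=a_i^Tu_i$, the hypothesis is $|a_i(x_i)|=|\nabla_{x_i}B_i^Tg_i|>0$ on $\partial\mathcal D_i[c_i]$, and continuity of $a_i$ with compactness of $\partial\mathcal D_i[c_i]$ yield a constant $\epsilon_i>0$ with $|a_i(x_i)|\ge\epsilon_i$ there.

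Given these bounds I would take the explicit feedback $u_i(x_i):=\gamma_i\,g_i(x_i)^T\nabla_{x_i}B_i(x_i)=\gamma_i\,a_i(x_i)$, with scalar gain $\gamma_i:=\max\{0,\beta_i\}/\epsilon_i^2$. This is a legitimate state-feedback policy: continuous when $B_i\in C^1$ (locally Lipschitz if moreover $B_i\in C^2$ and $g_i$ is Lipschitz, which holds in the SOS setting where $B_i$ is polynomial), and bounded on the compact set $\mathcal D_i[c_i]$. Along the closed loop, at every point of $\mathcal S_i$,
\[
\dot B_i=\psi_i(x_i,\{x_j\})+a_i(x_i)^Tu_i(x_i)=\psi_i+\gamma_i\,|a_i|^2\ \ge\ -\beta_i+\gamma_i\,\epsilon_i^2\ \ge\ 0,
\]
so~\eqref{E:derivative} is satisfied for subsystem~$i$, and the reduction of the first paragraph gives invariance of $\mathcal D_i[c_i]$.

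I expect the main obstacle to be the reduction step itself, namely the rigorous passage from the pointwise boundary inequality $\dot B_i\ge 0$ (required for each \emph{frozen} admissible neighbor state) to forward invariance of $\mathcal D_i[c_i]$ under every \emph{time-varying} admissible neighbor trajectory: this calls for the usual Nagumo/viability apparatus, including the tangential case $\dot B_i=0$ and, when $u_i$ is only continuous, the possible non-uniqueness of closed-loop solutions, and it is precisely the ingredient inherited from the proof of Theorem~2. A secondary item to make explicit is the compactness of $\mathcal D_j[c_j]$ (hence of $\mathcal S_i$), which is where the restriction $c_j\in[0,1)$ and the normalization $B_j(0)=1$ enter; the remaining estimates are routine continuity-and-compactness bookkeeping.
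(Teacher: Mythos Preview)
Your proposal is correct and follows essentially the same approach as the paper: choose the explicit feedback $u_i=\gamma_i\,g_i^T\nabla_{x_i}B_i$ with a scalar gain large enough that the control term dominates the uncontrolled drift on $\partial\mathcal D_i[c_i]\times\prod_{j}\mathcal D_j[c_j]$, so that condition~\eqref{E:control} holds. The paper's proof is a two-line version of yours (it writes the gain as $\beta_i\ge\max\,|\nabla B_i^T(f_i+\sum_jh_{ij})|/|\nabla B_i^Tg_i|^2$ over the same set and invokes~\eqref{E:control} directly), whereas you make explicit the compactness bookkeeping and the Nagumo-type reduction that the paper leaves implicit via its reference to Theorem~2.
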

\begin{proof}
Let us select $u_i\!=\!\beta_i g_i^T\nabla B_i$\, for some
\begin{align*}
\beta_i\geq \underset{{x_i\in\partial{D}_i[c],x_j\in\mathcal{D}_j[c]}}{\max}\frac{\left|\nabla B_i^T\left(f_i+{\sum}_{j\in\mathcal{N}_i}h_{ij}\right)\right|}{\left|\nabla B_i^Tg_i\right|^2}\,.
\end{align*}
By construction, this $u_i$ satisfies the condition \eqref{E:control}, thereby guaranteeing invariance of the domain $\mathcal{D}_i[c_i]$.\hfill\qed
\end{proof}

\begin{proposition}\label{P:sector}
Consider subsystem-$i$ of the interconnected system \eqref{E:network}. If $u_i^*$ is some state-feedback control policy which guarantees invariance of $\mathcal{D}_i[c_i]$ under bounded disturbances from the neighbors $j\in\mathcal{N}_i$ of the form $x_j\in\mathcal{D}_j[c_j]$, then so does a family of state-feedback control policies of the form $\lbrace u_i^*+\beta g_i^T\nabla B_i\,|\,\beta\!\geq\! 0\rbrace$\,.
\end{proposition}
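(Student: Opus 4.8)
The plan is to notice that, among the three invariance conditions \eqref{E:conditions}, only the derivative inequality \eqref{E:derivative} (equivalently \eqref{E:control}) involves the control policy: conditions \eqref{E:inclusion} and \eqref{E:safety} constrain $B_i$ and the scalar $c_i$ alone, and they hold regardless of which admissible state-feedback is used. Hence it suffices to show that if $u_i^*$ makes $\dot B_i\ge 0$ on $\partial\mathcal{D}_i[c_i]$ for every $x_j\in\mathcal{D}_j[c_j]$, $j\in\mathcal{N}_i$, then so does $u_i=u_i^*+\beta g_i^T\nabla B_i$ for each fixed $\beta\ge 0$; invariance of $\mathcal{D}_i[c_i]$ under the same disturbance bound then follows immediately by re-invoking the distributed barrier certificate theorem.

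The core step is a one-line computation that uses linearity of $\nabla_{x_i}B_i^Tg_i(x_i)u_i$ in $u_i$. Substituting the candidate policy,
\begin{align*}
&\nabla_{x_i}B_i^T\left(f_i+g_i(u_i^*+\beta g_i^T\nabla B_i)+{\sum}_{j\in\mathcal{N}_i}h_{ij}\right)\\
&\quad=\nabla_{x_i}B_i^T\left(f_i+g_iu_i^*+{\sum}_{j\in\mathcal{N}_i}h_{ij}\right)+\beta\left|\nabla_{x_i}B_i^Tg_i\right|^2.
\end{align*}
On $\partial\mathcal{D}_i[c_i]$ with all $x_j\in\mathcal{D}_j[c_j]$, the first term on the right is nonnegative precisely because $u_i^*$ is assumed to render $\mathcal{D}_i[c_i]$ invariant, while the second term is a nonnegative scalar times a square and hence $\ge 0$ whenever $\beta\ge 0$. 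Adding the two shows $\dot B_i\ge 0$ for the new policy on exactly the set demanded by \eqref{E:control}.

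The remaining bookkeeping is light: one should note that $x_i\mapsto\beta\,g_i(x_i)^T\nabla B_i(x_i)$ is a bona fide state-feedback law (polynomial, hence smooth, in the SOS setting used throughout, and in any case continuous since $g_i$ is locally Lipschitz and $B_i\in C^1$), and that the neighbor disturbance description $x_j\in\mathcal{D}_j[c_j]$ is unchanged, so the hypotheses of the distributed barrier certificate theorem still apply verbatim. I do not anticipate any genuine obstacle: the proposition is essentially the observation that one may always add a nonnegative multiple of the ``steepest-ascent'' direction $g_i^T\nabla B_i$ of the barrier to a safe policy without spoiling safety. I would close by remarking that this is the mechanism that will later let the admissible-input set $\mathcal{U}_i(x_i)$ of \eqref{E:sector_idea} be parametrized as a one-sided, sector-like family rather than a single control value.
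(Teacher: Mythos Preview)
Your argument is correct and is exactly the approach the paper takes: the paper's own proof is the one-line remark that any $u_i^*$ satisfying \eqref{E:control} continues to do so when $\beta g_i^T\nabla B_i$ is added, which is precisely your computation showing the extra term $\beta\left|\nabla_{x_i}B_i^Tg_i\right|^2\ge 0$. Your write-up simply makes explicit the bookkeeping the paper leaves implicit.
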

\begin{proof}
This follows trivially when we notice that for every feedback control policy $u_i^*$ satisfying \eqref{E:control}, any feedback policy $u_i^*+\beta g_i^T\nabla B_i$\,, for arbitrary $\beta\!\geq\! 0$\,, also does so. \hfill\qed
\end{proof}

While the above results are obtained without any consideration of possible bounds on the control inputs, in practice there are finite limits on the control inputs that can be applied. Assuming that there exists a safety feedback control policy $u_i^*$ which satisfies the bounds on the control inputs, we can find a $\beta^{\max}>0$ such the family of safety feedback control policies 
\begin{align*}
\lbrace u_i^*+\beta g_i^T\nabla B_i\,|\,\beta\in[0,\beta^{\max}]\rbrace
\end{align*} 
will generate control inputs satisfying the control bounds. 

Therefore we can construct a state-inclusive semi-algebraic set of the form \eqref{E:sector_idea} with a vector of polynomials:
\begin{align}\label{E:sector}
U_i(x_i,u)&=\left((u_i^*(x_i)\!-\!u\right)\odot\left(u_i^*\!+\!\beta^{\max} g_i^T\nabla B_i\!-\!u\right),
\end{align}
where the symbol $\odot$ represents component-wise multiplication, such that whenever a control input $u_i(t)$ chosen (arbitrarily) from the set $\mathcal{U}_i(x_i(t))$\,, i.e.
\begin{align}
\forall t:\quad u_i(t)\in\mathcal{U}_i(x_i(t))\,,
\end{align} 
the invariance of the domain $\mathcal{D}_i[c_i]$ is guaranteed under bounded disturbances, $x_j\in\mathcal{D}_j[c_j]$\,, from the neighbors. Note that, for polynomial barrier functions that are quadratic or higher order polynomials, we have $\nabla B_i(0)\!=\!0$\,, i.e., by construction, the semi-algebraic set \eqref{E:sector} leads to sector-like bounds on the control input.

%, as illustrated in Fig.\,\ref{F:sector}.
%
%\begin{figure}[thpb]
%\centering
%\includegraphics[scale=0.3]{sector_example.eps}
%\caption{Illustration of sector-like bounds on the control input with guarantees on robust safety.}
%\label{F:sector}
%\end{figure}

However, such sector-like bounds are too restrictive. The reason is that the semi-algebraic set theoretic condition \eqref{E:control}, used to construct \eqref{E:sector}, yields a control policy that needs to be activated only close to the boundary of the set defined by $\mathcal{D}_i[c_i]$, a subset of the safety region. As such, a safety control set defined in the form of \eqref{E:sector_idea}, with $U_i$ defined in \eqref{E:sector}, would invariably lead to unnecessarily restricted set of safe control inputs, especially close to the origin. This is circumvented in the following construction of safety control bounds:

\begin{theorem}\label{T:main}
Consider subsystem-$i$ of the network \eqref{E:network}. If $\left|\nabla B_i^Tg_i\right|\!>\!0$ on the boundary set $\partial\mathcal{D}_i[c_i]$\,, then there exist state-feedback control policies $u_i^\alpha(x_i)$ and $u_i^\theta(x_i)$, satisfying $u_i^\alpha(x_i)\!<\! u_i^\theta(x_i)$ (strict vector inequality) everywhere in $\mathcal{D}_i[c_i]$, 
%\begin{align*}
%x_i\in\mathcal{D}_i[c_i]\implies u_i^\alpha(x_i)\neq u_i^\theta(x_i)\,,
%\end{align*}
such that the family of state-feedback control policies
\begin{align*}
u_i(x_i)\in\left\lbrace r\,u_i^\alpha(x_i)+(1-r)\,u_i^\theta(x_i)\,\left|\,r\in[0,1]\right.\right\rbrace
\end{align*}
guarantees invariance of $\mathcal{D}_i[c_i]$ under bounded disturbances from the neighbors, $x_j\in\mathcal{D}_j[c_j]\,\forall j\in\mathcal{N}_i$\,.
\end{theorem}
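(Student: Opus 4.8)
The plan is to reduce the claim to the boundary derivative condition~\eqref{E:control} (equivalently~\eqref{E:derivative}) and to lean on two simple facts: $\dot B_i$ is \emph{affine} in the control input $u_i$, and a feasible feedback can always be inflated so as to acquire a uniform strictly positive margin in~\eqref{E:control}. Recall that, by the distributed barrier certificate theorem of~\cite{kundu2019distributed}, a feedback $u_i$ renders $\mathcal{D}_i[c_i]$ invariant under the bounded disturbances $x_j\in\mathcal{D}_j[c_j]$, $j\in\mathcal{N}_i$, as soon as $\nabla B_i^T(f_i+g_iu_i+{\sum}_{j\in\mathcal{N}_i}h_{ij})\ge 0$ on $\partial\mathcal{D}_i[c_i]$ for all such $x_j$; the remaining hypotheses of that theorem, $B_i(0)=1>c_i$ and $B_i<0$ on $\mathcal{X}_{u,i}$, are inherited from the scaling convention and from $B_i$ being a valid barrier for the isolated subsystem. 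Since this expression is affine in $u_i$, whenever two policies $u_i^\alpha$ and $u_i^\theta$ each satisfy it, so does every convex combination $r\,u_i^\alpha+(1-r)\,u_i^\theta$, $r\in[0,1]$. It therefore suffices to exhibit two feedback laws, checked individually, that are componentwise strictly ordered on all of $\mathcal{D}_i[c_i]$.

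First I would fix a feasible feedback endowed with margin. By Proposition~\ref{P:existence} there is a policy $u_i^0$ satisfying~\eqref{E:control}, and by Proposition~\ref{P:sector} the policy $u_i^*:=u_i^0+\beta_0\,g_i^T\nabla B_i$ is again feasible for every $\beta_0>0$; moreover adding $\beta_0 g_i^T\nabla B_i$ increases the left-hand side of~\eqref{E:control} by exactly $\beta_0|\nabla B_i^T g_i|^2$. Because $\partial\mathcal{D}_i[c_i]$ is compact and, by hypothesis, $|\nabla B_i^T g_i|>0$ on it, the quantities $m_i:=\min_{\partial\mathcal{D}_i[c_i]}|\nabla B_i^T g_i|^2$ and $M_i:=\max_{\partial\mathcal{D}_i[c_i]}|\nabla B_i^T g_i|$ obey $0<m_i\le M_i^2<\infty$. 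Hence $u_i^*$ carries the uniform margin $\nabla B_i^T(f_i+g_iu_i^*+{\sum}_{j}h_{ij})\ge\delta:=\beta_0 m_i>0$ on $\partial\mathcal{D}_i[c_i]$, uniformly over all admissible neighbor states.

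Next, let $\mathbf{1}\in\mathbb{R}^{r_i}$ be the all-ones vector (so $|\mathbf{1}|=\sqrt{r_i}$), choose $\epsilon>0$ with $\epsilon\sqrt{r_i}\,M_i\le\delta$, and set
\[
u_i^\alpha(x_i):=u_i^*(x_i)-\epsilon\,\mathbf{1}\,,\qquad u_i^\theta(x_i):=u_i^*(x_i)+\epsilon\,\mathbf{1}\,.
\]
Then $u_i^\theta-u_i^\alpha=2\epsilon\,\mathbf{1}$ has every component equal to $2\epsilon>0$, so $u_i^\alpha<u_i^\theta$ componentwise everywhere on $\mathcal{D}_i[c_i]$. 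For feasibility of $u_i^\alpha$ (the case of $u_i^\theta$ being symmetric), on $\partial\mathcal{D}_i[c_i]$ one has $\nabla B_i^T(f_i+g_iu_i^\alpha+{\sum}_{j}h_{ij})=\nabla B_i^T(f_i+g_iu_i^*+{\sum}_{j}h_{ij})-\epsilon\,(\nabla B_i^T g_i)\mathbf{1}\ge\delta-\epsilon\,|\nabla B_i^T g_i|\,|\mathbf{1}|\ge\delta-\epsilon\,M_i\sqrt{r_i}\ge0$, using Cauchy--Schwarz and the choice of $\epsilon$; likewise for $u_i^\theta$. By the affineness observation, every $r\,u_i^\alpha+(1-r)\,u_i^\theta$, $r\in[0,1]$, then satisfies~\eqref{E:control}, and the cited theorem yields invariance of $\mathcal{D}_i[c_i]$ under $x_j\in\mathcal{D}_j[c_j]$, $j\in\mathcal{N}_i$.

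The one step with genuine content is the uniform positive margin: this is exactly where the standing hypothesis $|\nabla B_i^T g_i|>0$ on the compact set $\partial\mathcal{D}_i[c_i]$ is indispensable, since it is what lets a merely feasible policy be pushed to one with slack in \emph{every} control coordinate at once; everything after that is bookkeeping. I would round off with a remark that, unlike the sector set~\eqref{E:sector}, the resulting band has width $2\epsilon$ uniformly over $\mathcal{D}_i[c_i]$ and in particular does not collapse at the origin where $\nabla B_i=0$, and that for the sum-of-squares synthesis one would replace $u_i^0$ by a polynomial feedback obtained as in~\eqref{E:control_sos}.
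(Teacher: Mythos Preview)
Your argument is correct and, in fact, cleaner than the paper's own proof as a pure existence result. The key mechanism---affineness of $\dot B_i$ in $u_i$ plus a compactness-and-margin argument---is exactly right, and your use of the hypothesis $|\nabla B_i^Tg_i|>0$ on $\partial\mathcal{D}_i[c_i]$ to manufacture a uniform slack $\delta=\beta_0 m_i$ is the essential step. However, your route is genuinely different from the paper's. The paper does not perturb by $\pm\epsilon\mathbf{1}$; instead it relaxes the sector condition \eqref{E:sector} to the state-dependent inequality $U_i(x_i,u)\le\gamma\log\!\big((1-c_i)/(1-B_i(x_i))\big)$, solves this componentwise quadratic for $u$, and reads off $u_i^\alpha,u_i^\theta$ as the two roots. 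Strict ordering then follows because the discriminant, proportional to $(\beta^{\max})^2(g_i^T\nabla B_i)^2+4\gamma\log\!\big((1-c_i)/(1-B_i(x_i))\big)$, is strictly positive on $\mathcal{D}_i[c_i]$: the log term is positive in the interior, and on the boundary the first term survives by the standing hypothesis. Your construction buys simplicity and a band of uniform width $2\epsilon$ with no extra machinery; the paper's construction buys an explicit, tunable closed-form (parametrized by $\gamma$) whose band width \emph{grows} near the origin and shrinks only near $\partial\mathcal{D}_i[c_i]$, which is precisely the object they implement as the safety filter in Section~\ref{S:resul}. For the bare theorem your proof suffices; for the paper's downstream synthesis the explicit relaxed form \eqref{E:sector_relax} is the point.
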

\begin{proof}
Recall that the Propositions\,\ref{P:existence} and \ref{P:sector} combine to show the existence of sector-like bounds for safety control inputs given in the form of \eqref{E:sector_idea} with $U_i(x_i,u)$ given by \eqref{E:sector}. Now, note that we can use the properties of the barrier functions to expand the allowable safety control set as:
\begin{align}\label{E:sector_relax}
\mathcal{U}_i(x_i):=\left\lbrace u\,\left|\, U_i(x_i,u)\!\leq\! \gamma\log\left(\!\frac{1\!-\!c_i}{1\!-\!B_i(x_i)}\!\right)\right.\right\rbrace,
\end{align}
where $\gamma$ is a positive scalar termed as the `relaxation coefficient', and $U_i$ is defined in \eqref{E:sector}. Observe that the right-hand side of the inequality is non-negative inside the domain $\mathcal{D}_i[c_i]$, with it approaching zero at the boundary $\partial\mathcal{D}_i[c_i]$ and infinity at the origin. The quadratic inequality can be solved to obtain a family of safety control inputs,
\begin{align*}
\lbrace u\,|\,u_i^\alpha(x_i)\leq u\leq u_i^\theta(x_i)\rbrace
\end{align*}
guaranteeing the invariance of $\mathcal{D}_i[c_i]$ under bounded neighbor disturbances $x_j\in\mathcal{D}_j[c_j]\,\forall j\in\mathcal{N}_i$\,. Moreover, it can be shown that (detailed omitted), the vector difference $\Delta u(x_i):=u_i^\theta(x_i)-u_i^\alpha(x_i)$\,, is proportional to the component-wise square-root of the term
\begin{align*}
\left(\beta^{\max}\right)^2\left(g_i\nabla B_i\right)^2+4\gamma\log\left(\!\frac{1\!-\!c_i}{1\!-\!B_i(x_i)}\!\right)
\end{align*}
where the first term denotes component-wise square. Second term is zero only at the boundary $\partial\mathcal{D}_i[c_i]$. But, as per assumption, $\left|\nabla B_i^Tg_i\right|\!>\!0$ on the boundary set $\partial\mathcal{D}_i[c_i]$\,. Therefore $\Delta u(x_i)>0$ everywhere inside $\mathcal{D}_i[c_i]$\,. 
\hfill\qed\end{proof}

Note that the safety control set \eqref{E:sector_relax} in Theorem\,\ref{T:main} can be recast into \eqref{E:sector_idea} by modifying the definition of $U_i$ in \eqref{E:sector}. Theorem\,\ref{T:main} provides an algebraic condition that can be easily checked to verify whether a dispatched control input is safe or not, using only locally available information (such as the states belonging to the subsystem-$i$ alone). This contrasts with the other alternative methods, such as solving a semi-algebraic set condition similar to the one in \eqref{E:control} which requires solving an equivalent, and non-trivial, semi-definite optimization, to verify safety guarantees under a control policy, or requires information from neighboring subsystems in order to verify safety under a dispatched control input.

\section{Numerical Results}\label{S:resul}
%===========================================
%
%\begin{figure}[thpb]
%\centering
%\includegraphics[scale=0.25]{SPIDERS.eps}
%\caption{Microgrid network adopted from \cite{ersal2011impact}.}
%\label{F:spiders}
%\end{figure}
%
We consider the microgrid example used in \cite{kundu2019distributed}, which was modified from the network described in \cite{ersal2011impact}. Specifically, we disconnect the utility and replace the substation (bus 0) by a droop-controlled inverter, with three other inverters placed on buses 1, 4 and 5\,. Inverter dynamics were modeled in the form of \eqref{E:droop}. Bus 0 was considered as the reference bus for the phase angle. The droop coefficients $\lambda^p_i$ and $\lambda^q_i$ were selected as $2.43\,$rad/s/p.u. and $0.2$\,p.u./p.u., respectively; while the filter time-constant $\tau_i$ was set to 0.5\,s \cite{Schiffer:2014}. Nominal values of voltage and frequency, as well as the active and reactive power set-points were obtained by solving the steady-state power-flow equations \eqref{E:PQ}, which were then used to shift the nominal operating point to the origin. The loads were modeled as constant power loads, and a Kron reduced network with only the inverter nodes were used for analysis.
The unsafe set was defined in terms of the shifted (around the 1\,p.u.) nodal voltage magnitudes as $v_i< -0.4\,\text{p.u.}~\text{ or }~\,v_i> 0.2$ p.u.
%\begin{align*}
%\text{(unsafe)}\quad \,\text{p.u.}
%\end{align*}
In this paper, we focus only on the transient voltage safety limits, but other safety limits (such as frequency, phase angle difference, etc.) can be formulated as well.

\begin{figure}[thpb]
\centering
\subfigure[]{
\includegraphics[scale=0.4]{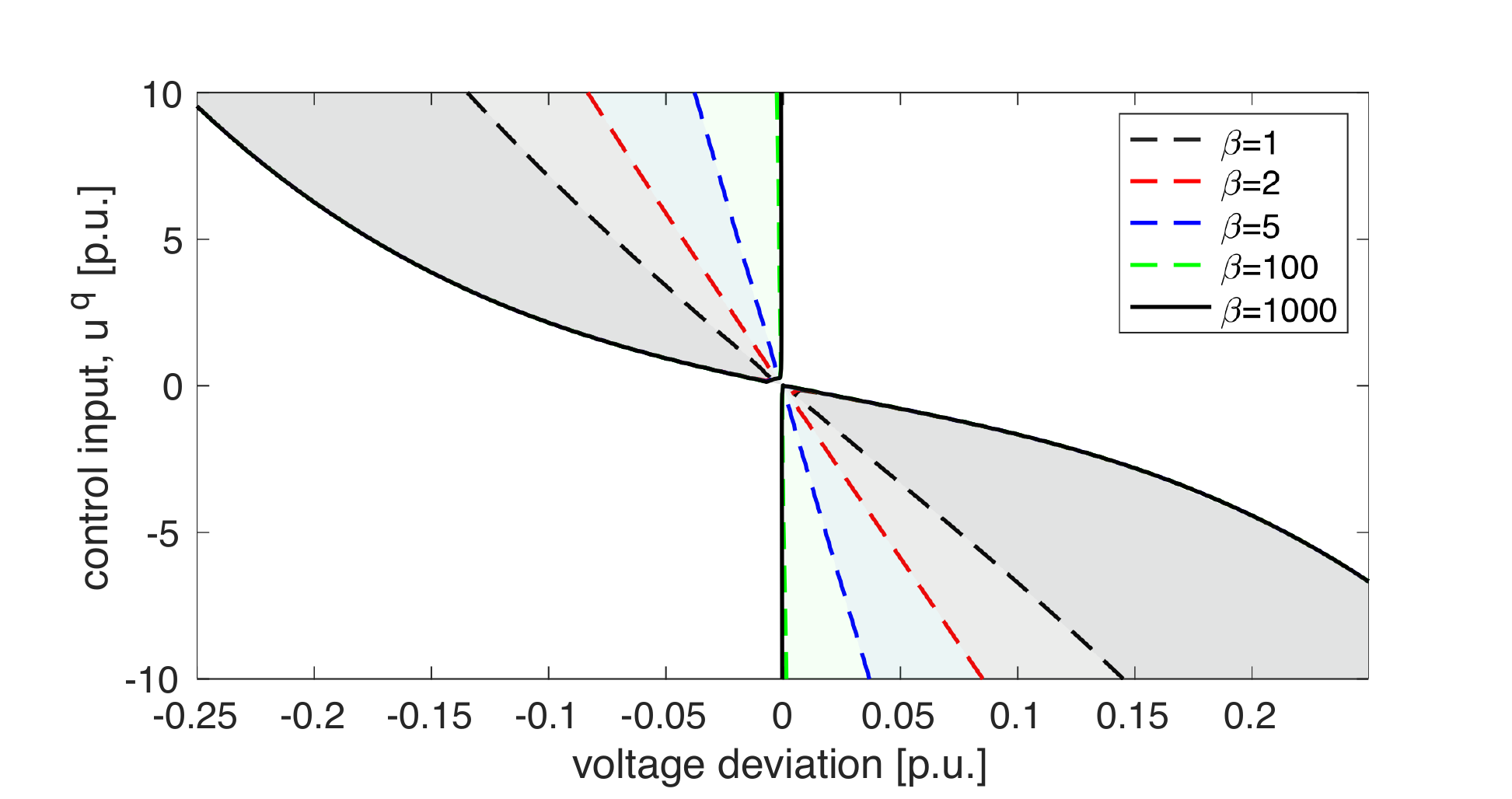}\label{F:beta}
}
\subfigure[]{
\includegraphics[scale=0.4]{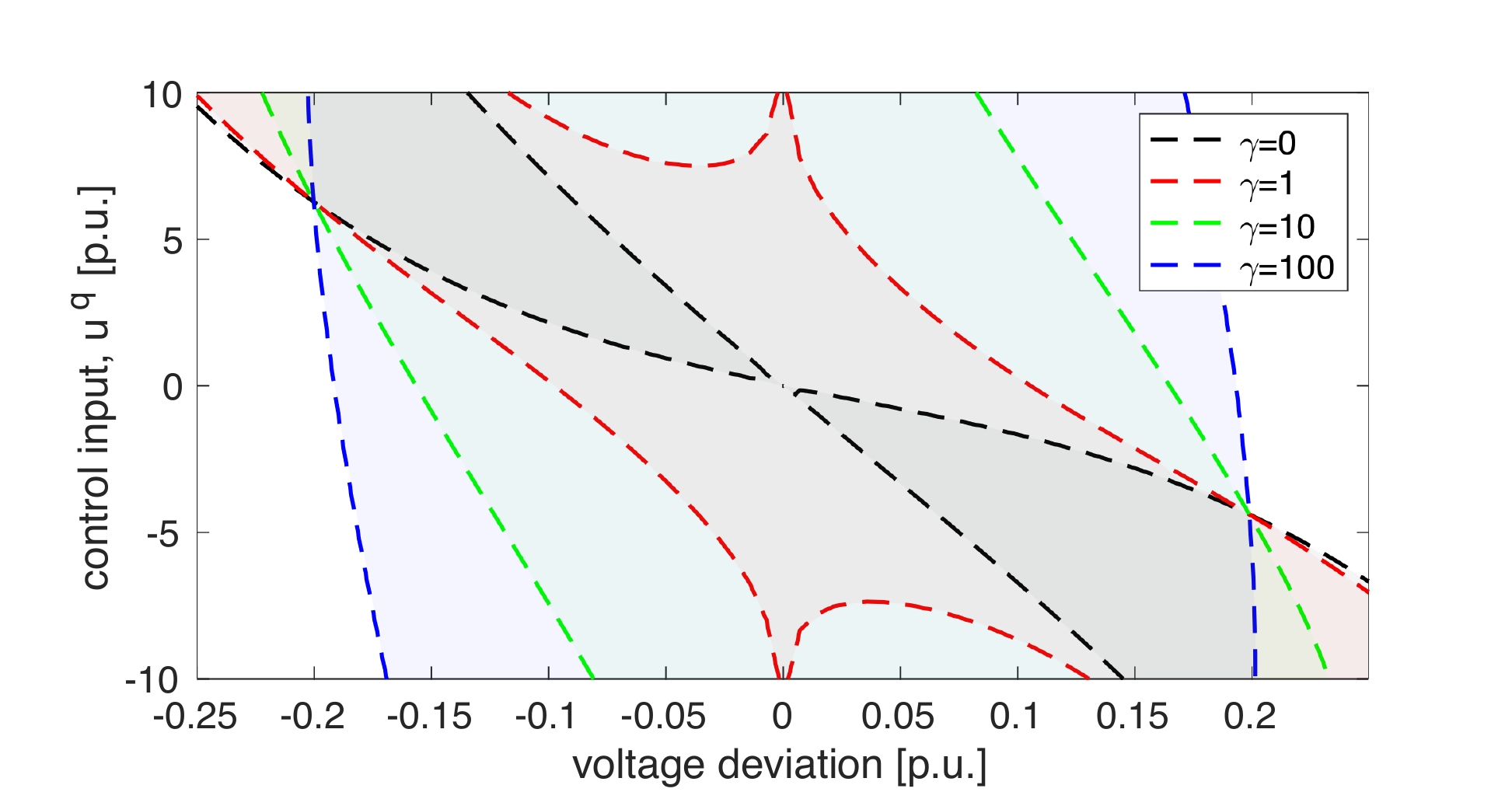}\label{F:gamma}
}
\subfigure[]{
\includegraphics[scale=0.4]{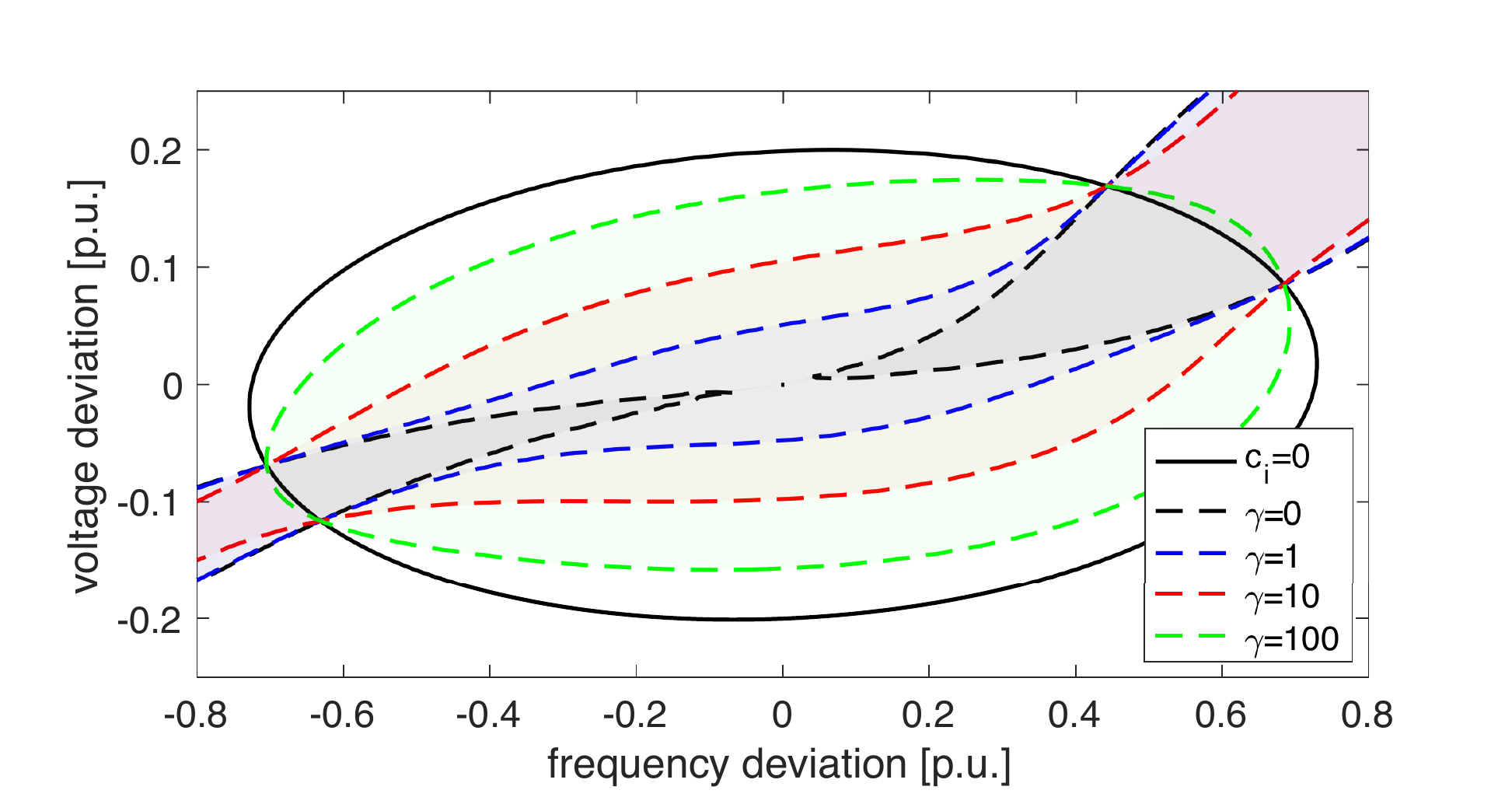}\label{F:2D}
}
\caption[]{Illustration of the state-inclusive bounds on the safety control inputs for the inverter 3: (a) We keep $\gamma=0$ and vary $\beta$. (b) We keep $\beta=1$ and vary $\gamma$. (c) We keep $\beta=1$ and vary $\gamma$, and show at which state values the control input $u^q=0$ is deemed `safe'.}
\label{F:res_sector}
\end{figure}

Barrier functions are computed for each isolated and autonomous system using the algorithm described in \cite{Wang:2018,kundu2019distributed}. An optimization problem similar to \eqref{E:control_sos} is solved to compute a control policy $u_i^*(x)$ that guarantees the invariance of the domains defined by the barrier level-sets $c_i=0\,\forall i$\,. Then the bounds of the control inputs are computed for each subsystem using the closed-form expressions proposed in \eqref{E:sector_relax}. Fig.\,\ref{F:res_sector} illustrates the obtained state-inclusive bounds on the allowable safe control inputs. The first two sub-plots show the safe values of reactive power control input, as a function of the voltage deviation (while frequency deviation is kept at zero, i.e. a projection) for various values of $\beta$ and $\gamma$\,. The third sub-plot shows the volume of the state-space over which the control input $u^q=0$ is deemed to be `safe', for various values of $\gamma$, with $\beta=1$\,. In order to illustrate the effect of the sector-bounds, we show the time-series simulations of two scenarios. The subsystem 3 results are presented in Fig.\,\ref{F:res_time}. Uncertainties are created by randomly sampling the neighboring states $(x_j)$ from the set $\mathcal{D}_j[0]$\,. The top plots shows that without any safety filter, the voltage trajectories violate the safety bounds when the dispatched set-point is $u_i^q=0$. The bottom two plots show the safety filter is active with the control bounds, with $(\beta,\gamma)\!=\!(1,100)$, and the voltage stays within the safety limits.

\begin{figure}[thpb]
\centering
\subfigure[]{
\includegraphics[scale=0.4]{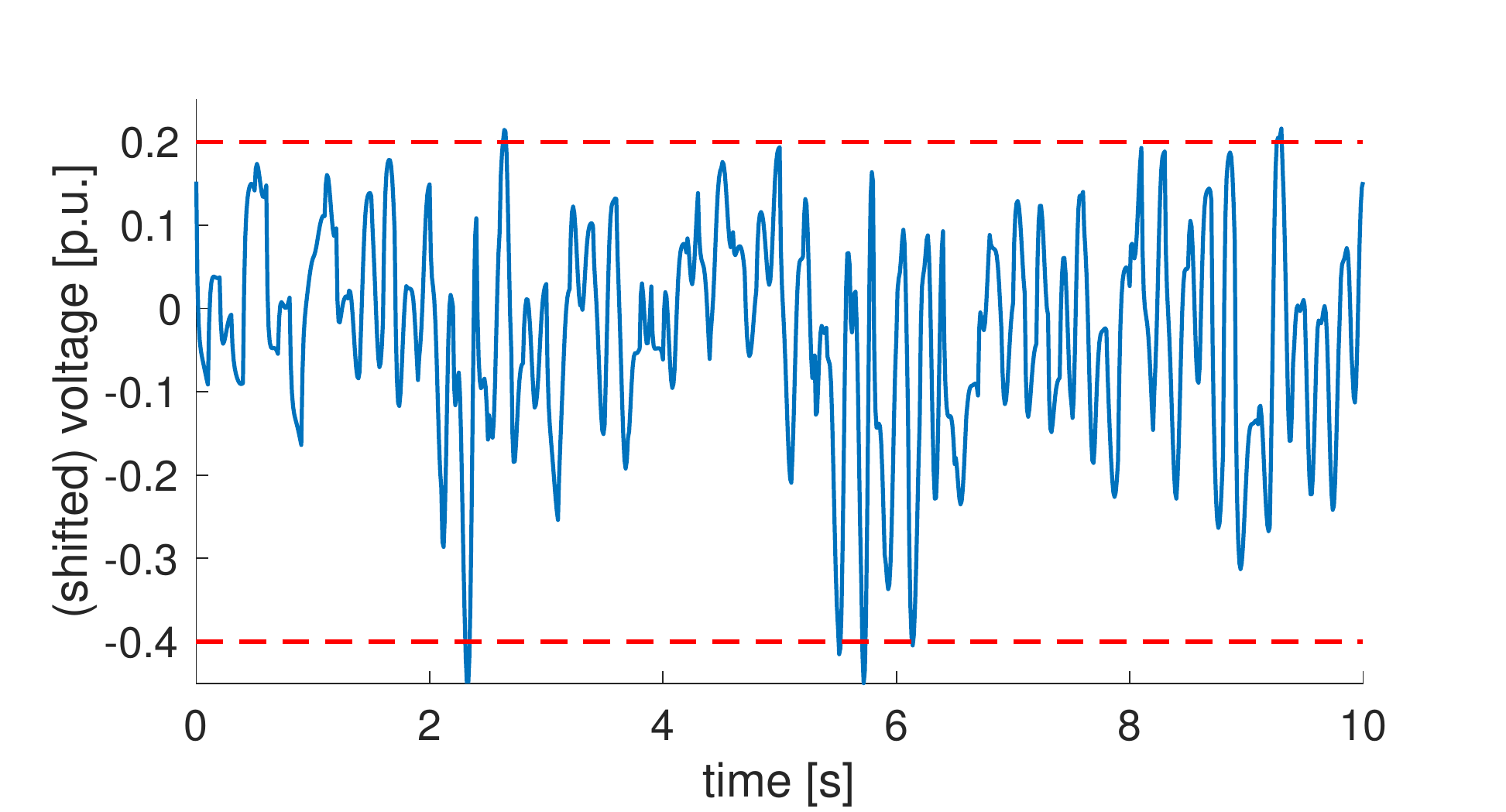}\label{F:res3_v}
}
\subfigure[]{
\includegraphics[scale=0.4]{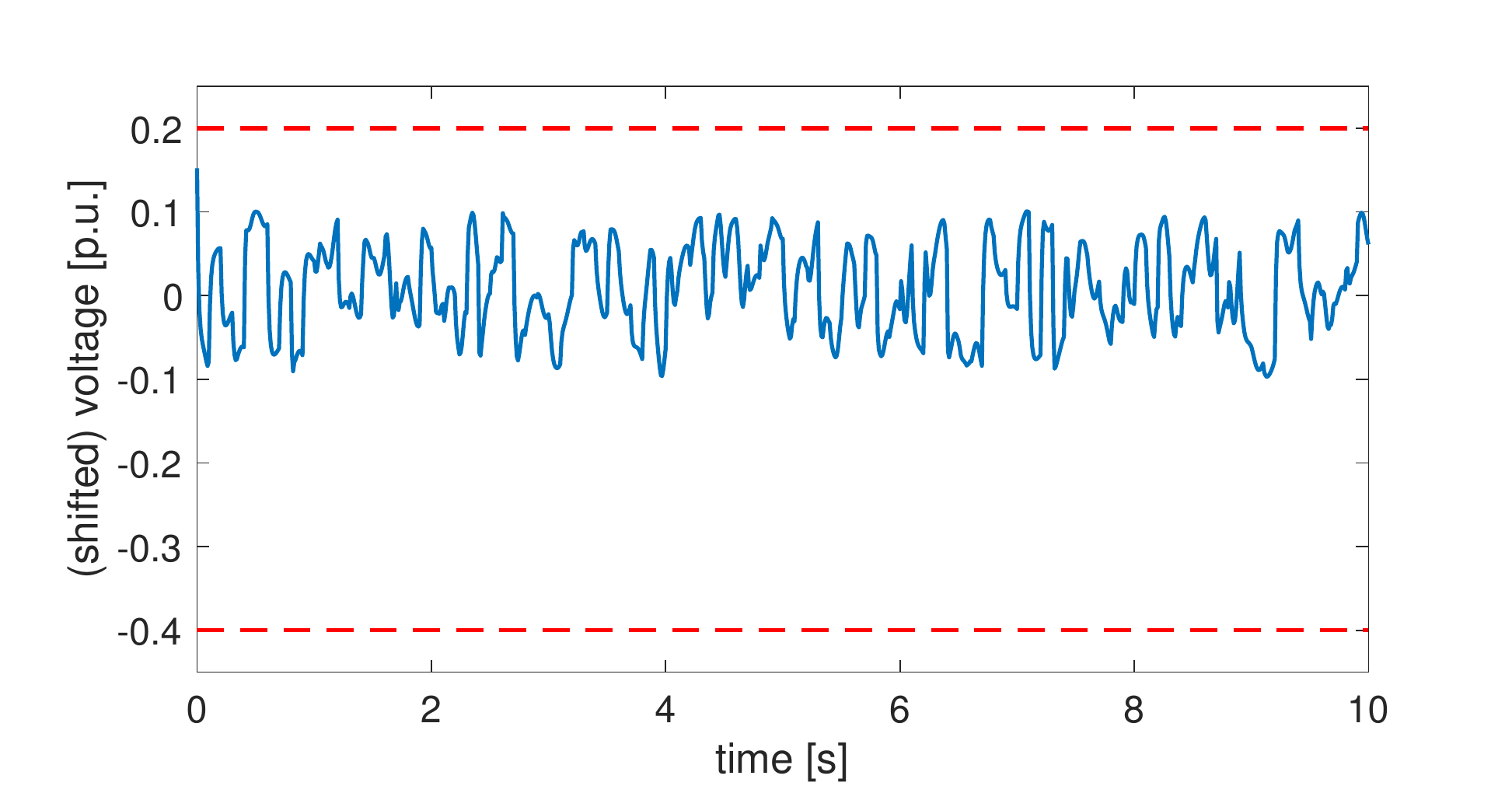}\label{F:res3_vc}
}
\subfigure[]{
\includegraphics[scale=0.4]{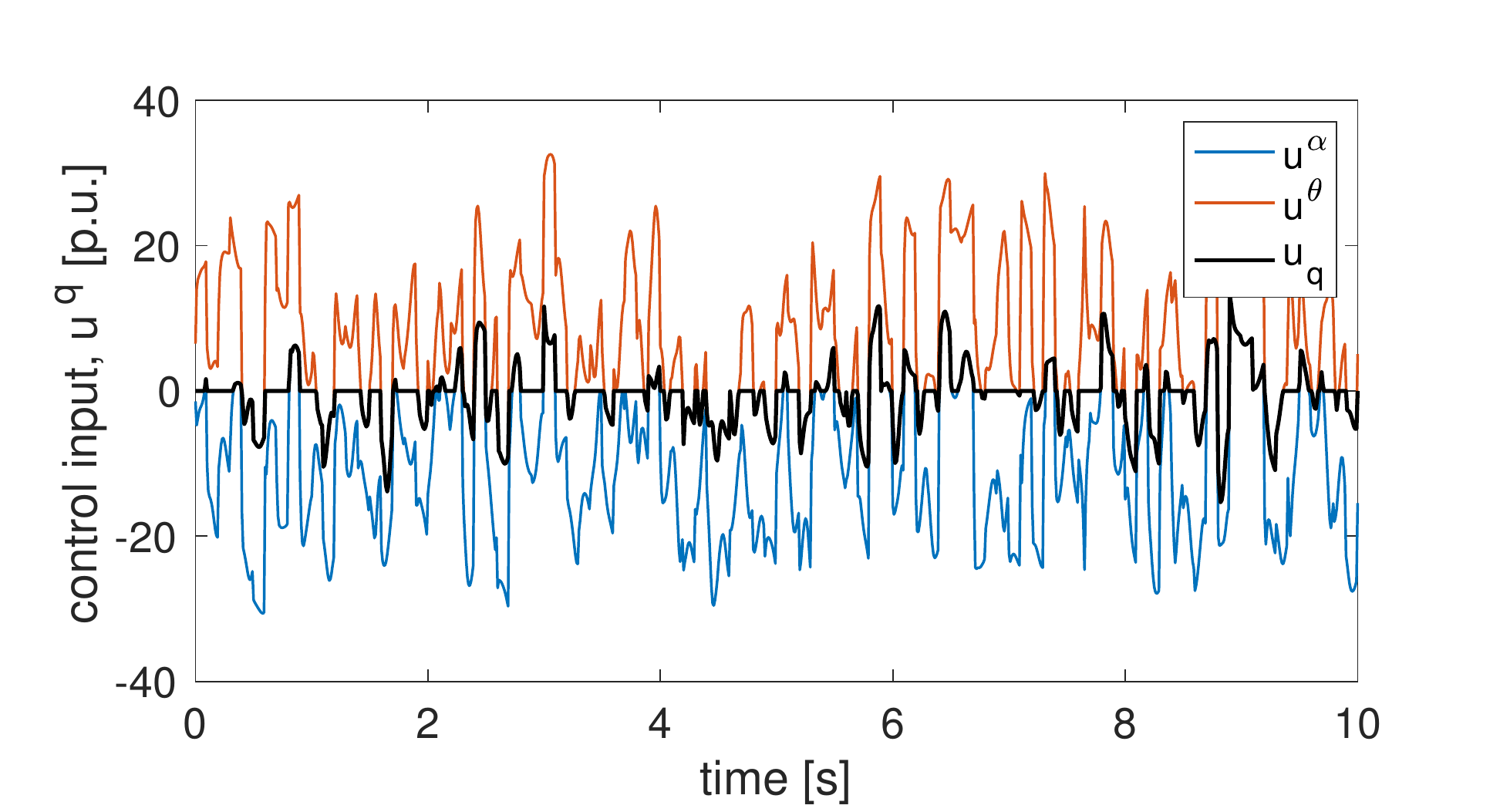}\label{F:res3_uc}
}
\caption[]{Time-series results to illustrate the effect of the controller, under uncertain fluctuations from the neighbors. (a) Dispatched control set-point $u^q=0$ is unchanged. In (b) and (c) the safety filter modifies the control set-point using the computed bounds $u^\alpha$ and $u^\theta$.}
\label{F:res_time}
\end{figure}

%===========================================
\section{Conclusion}
%===========================================

In this paper we consider the problem of safety in inverter-based microgrids. Using barrier functions based methods, we introduce the notion of a safety filter which acts like a buffer between higher level control set-points and the device-level controller. We propose a method to synthesize easy-to-evaluate state-inclusive bounds on the allowable (safe) control inputs. SOS programming was used to design the safety filters in a distributed manner. Numerical results illustrate the effectiveness of the concept. Future work will explore the extension of such methods to larger power systems networks, and explore the integrated operation of these safety filters in a microgrid with a heterogeneous collection of DERs.

% conference papers do not normally have an appendix

% use section* for acknowledgment
\section*{Acknowledgment}

This work was carried out at PNNL (contract DE-AC05-76RL01830) under the support from the U.S. Department of Energy as part of their Grid Modernization Initiative.

% trigger a \newpage just before the given reference
% number - used to balance the columns on the last page
% adjust value as needed - may need to be readjusted if
% the document is modified later
%\IEEEtriggeratref{32}
% The "triggered" command can be changed if desired:
%\IEEEtriggercmd{\enlargethispage{-5in}}

% references section

% can use a bibliography generated by BibTeX as a .bbl file
% BibTeX documentation can be easily obtained at:
% http://mirror.ctan.org/biblio/bibtex/contrib/doc/
% The IEEEtran BibTeX style support page is at:
% http://www.michaelshell.org/tex/ieeetran/bibtex/
\bibliographystyle{IEEEtran}
% argument is your BibTeX string definitions and bibliography database(s)
\bibliography{references,RefKundu,RefBarrier,RefMGStability}
%
% <OR> manually copy in the resultant .bbl file
% set second argument of \begin to the number of references
% (used to reserve space for the reference number labels box)
%\begin{thebibliography}{1}
%
%\bibitem{IEEEhowto:kopka}
%H.~Kopka and P.~W. Daly, \emph{A Guide to \LaTeX}, 3rd~ed.\hskip 1em plus
%  0.5em minus 0.4em\relax Harlow, England: Addison-Wesley, 1999.
%
%\end{thebibliography}

% that's all folks
\end{document}